\begin{document}
	\title{\huge Processing Load Allocation of On-Board Multi-User 
		Detection for Payload-Constrained Satellite Networks}

\author{Sirui~Miao,
	Neng~Ye,~\IEEEmembership{Member,~IEEE,}
	Peisen~Wang,
	and~Qiaolin~Ouyang
	
	\vspace{-0.75em}
	
	\thanks{This research was funded by the National Natural Science Foundation of
		China under grant 62101048, and Young Elite Scientists Sponsorship Program
		by CAST under grant 2022QNRC001. (Corresponding author: Neng Ye.)}
\thanks{S. Miao and N. Ye are with the School of Cyberspace Science and Technology, Beijing Institute of Technology, Beijing 100081, China (E-mails:siruimiao@bit.edu.cn,ianye@bit.edu.cn).}
\thanks{P. Wang and Q. Ouyang are with the School of Information and Electronics, Beijing Institute of Technology, Beijing 100081, China (E-mails:wps@bit.edu.cn, qlouyang@bit.edu.cn).}
}
\markboth{Submitted to IEEE Wireless Communication Letters}{Yu \MakeLowercase{Yang{et al.}}: Processing Load Allocation of On-Board Multi-User Detection for Payload-Constrained Satellite Networks}

\maketitle
	\maketitle
	\begin{abstract}
		The rapid advance of mega-constellation facilitates the booming of direct-to-satellite massive access, where
		 multi-user detection  is critical to alleviate the induced inter-user interference.
		 While centralized implementation of on-board  detection induces unaffordable complexity for a single satellite, this paper proposes to allocate the processing load among cooperative satellites for finest exploitation of distributed processing power.
		 Observing the inherent disparities among users, we first excavate the closed-form trade-offs between achievable sum-rate and the processing load corresponding to the satellite-user matchings, which leads to a system sum-rate maximization problem under stringent payload constraints.
		 To address the non-trivial integer matching, we develop a quadratic transformation to the original problem, and prove it an equivalent conversion. The problem is further simplified into a series of subproblems employing successive lower bound approximation which obtains polynomial-time complexity and converges within a few iterations.
		 Numerical results show remarkably complexity reduction compared with centralized processing, as well as around 20\% sum-rate gain compared with other allocation methods.
		
		\begin{IEEEkeywords}
			Multi-user detection, processing load allocation, cooperative satellite transmission, system sum-rate maximization, satellite-user matchings.
		\end{IEEEkeywords}
	\end{abstract}
	\section{Introduction}
	Direct-to-satellite Internet of Things (IoT) is regarded as a charming paradigm to accommodate massive ubiquitous connections in future 6G \cite{Qingservice}. The numerous IoT terminals underneath each satellite beam, however, cause severe inter-user interference, either due to the uncoordinated access, or the time and frequency asynchrony due to large channel dynamics. Multi-user detection (MUD) is a critical technology to suppress interference and promote system throughput. Combined with multi-satellite cooperative reception, which is facilitated via mega-constellation, MUD can further enhance the system performance by excavating spatial diversity \cite{emir2021deepmud}.
	
	One straightforward approach embracing distributed satellites is to collect the received signals and perform the centralized  processing in a single computing fabric, which is similar to cloud-computing in terrestrial networks \cite{wang2020grant,9174777,9720914}. Ke \textit{et al.} in \cite{9174777} proposed a successive interference cancellation based active user detection algorithm where the distributed units sample and transfer the received signals to a central unit. To reduce the detection complexity of centralized  processing, Wang \textit{et al.} in \cite{9720914} proposed a symbol-wise  maximum likelihood detection algorithm using  sparsity of the channel matrix. Unfortunately, the centralized implementation is impractical in the satellite platform due to the unbearable computational complexity at the centralized processing unit \cite{malkowsky2017world}.

	An alternative approach is the decentralized processing that 
	distributes the processing load across multiple receiving points, with a central receiver combining the local estimations \cite{zhang2020decentralized,amiri2021distributed,hayakawa2018distributed}.  To lower the computational cost, Amiri \textit{et al.} in \cite{amiri2021distributed} proposed a set of receiver options for complexity adjustable distributed detection based on variational message passing algorithm. 
	Hayakawa \textit{et al.} in \cite{hayakawa2018distributed}  further reduced the complexity in the view of system architecture by proposing a modified approximate message passing algorithm  where fusion center is no longer required.
	Nonetheless, previous research does not consider the individual processing load constraint for a single unit, which forms the stringent limitation of the payload-constrained satellite networks. Therefore, it calls for  sophisticated design on complexity control for a satellite to implement distributed signal processing \cite{liu2015analytical}.
	
	Usually, the complexity of MUD increases exponentially with the user number, which inspires us to excavate the number of serving users for the satellite.
	Compared with terrestrial communications, satellite networks experience prominent disparities among users with respect to
	 channel gains.  
	 Huge computation power is consumed to recover weak users' signals but doesn't yield significant gain in system sum-rate. These  characteristics motivate us to elaborately  design satellite-user matching relationship to meet the payload-constrained satellite networks while guaranteeing the system sum-rate.

	In this paper, we introduces a novel processing load allocation method for spaceborne MUD in payload-constrained distributed satellite networks. First, the closed-form trade-offs of both the system sum-rate and the processing load with respect to the satellite-user matchings are developed to convert load allocation into sum-rate optimization. Then, we formulate the processing load allocation problem as a  sum-rate maximization problem with respect to  satellite-user matchings. To address the non-trivial integer matching, an equivalent transformation is developed using
	quadratic transformation as well as Lagrangian dual function, which is further converted into a series of subproblems using successive lower bound approximation. We prove that the  obtained results are the Karush-Kuhn-Tucker (KKT) points of the original problem and obtain polynomial-time complexity.
	Numerical results show 75\% complexity reduction with less than 10\% sum-rate loss compared with centralized processing. Compared to other allocation methods, our proposed allocation method provides around 20\% sum-rate gain.

	\section{System Model and Problem Formalization}
	\newtheorem{proposition}{Proposition}
	In this section, we give the signal models of the transceivers and formulate the load allocation problem as a sum-rate maximization problem by optimizing satellite-user matchings.

	\subsection{Signal Model}
	\begin{figure}[t]
		\centering
		\includegraphics[width=1.0\linewidth]{./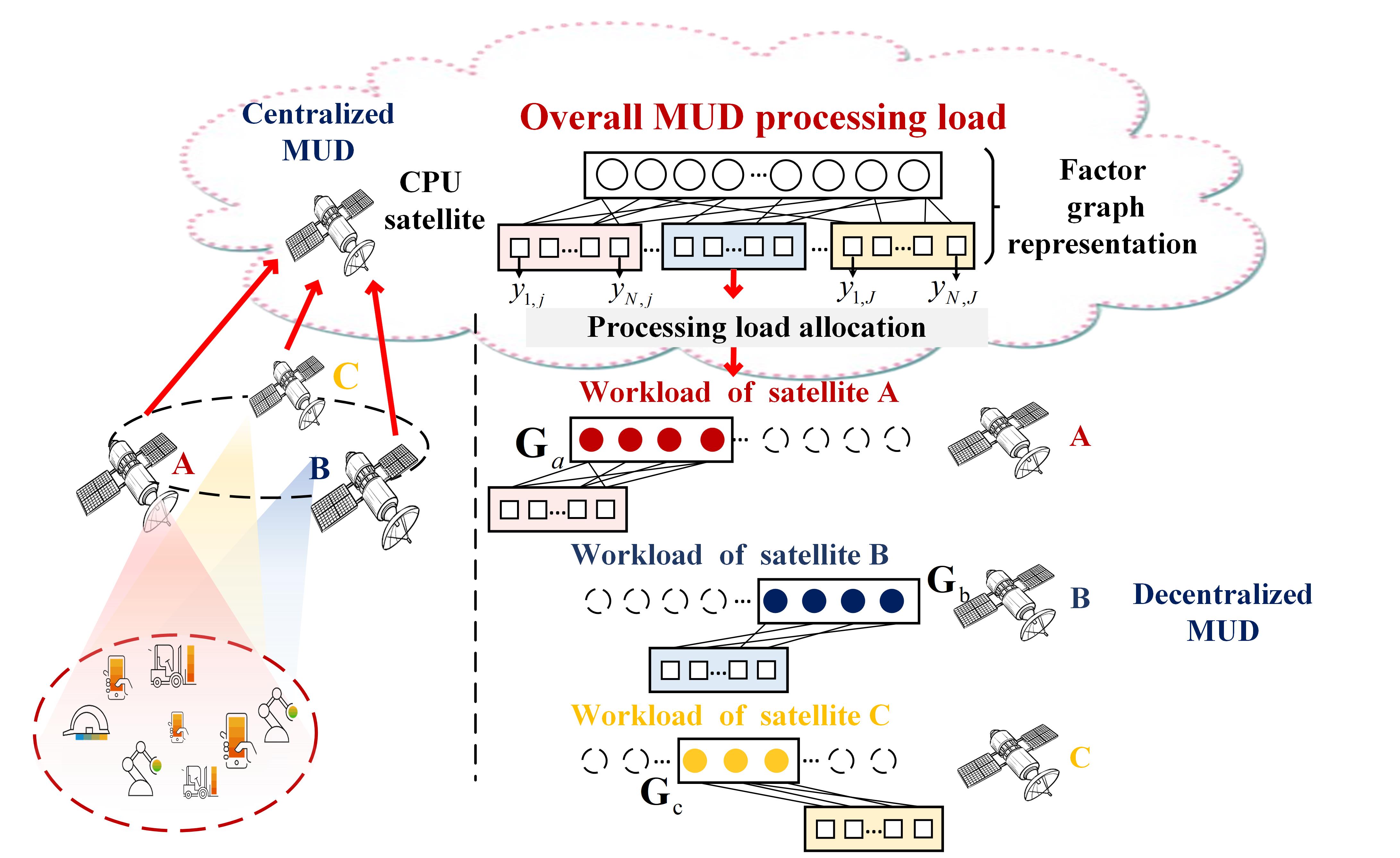}
		\caption{Illustration of processing load allocation in cooperative network.}
		\label{framework_short}
	\end{figure}
	We consider an uplink transmission scenario where $K$ ground terminals transmit their source symbols over $N$ resource elements (REs) in a non-orthogonal way to enhance the connectivity \cite{fang2022noma}. These terminals are simultaneously served by $J$ satellites interconnected via ISLs.  
	Denote $\mathcal{X}$ as the modulation constellation alphabet with $\left|\mathcal{X}\right|=M$ standing for the cardinality. For each symbol interval, every ${\log_2}M$ source bits of the $k$-th user are first modulated to $x_k \in \mathcal{X}$ according to the bit-to-symbol mapping, and then spread over $N$ REs according to user-specific spreading sequence $\mathbf{s}_k=[s_{1,k},s_{2,k},\ldots,s_{N,k}]^T \in \mathbb{C}^N$ with $\left\| {\mathbf{s}_k} \right\|=1$.
	Denote $F(n)=\left\{k \mid s_{n,k}\ne 0, k=1,\ldots,K \right\}$ as the set of users that occupy the $n$-th RE. Let
	$d_f$ denotes  the maximum number of users that are allowed to interfere within a single RE.
	
	Denote $h_{n,k,j}$ as the channel coefficient between the $j$-th satellite and the $k$-th user on the $n$-th RE
	\begin{equation}
		{h_{n,k,j}} = \frac{{{g_{n,k,j}}}}{{D\left( {{d_{k,j}}} \right)}},
	\end{equation}
	where $g_{n,k,j}$ and  $d_{k,j}$ respectively denote the complex fading gain and  the distance between the $j$-th satellite and the $k$-th user, and $D(\cdot)$ denotes the path loss function.
	
	The received signal at the $j$-th satellite over $N$ REs, denoted as $\mathbf{y}_j=[y_{1,j},y_{2,j},\ldots,y_{N,j}]^T\in \mathbb{C}^N$, is thus given by
	\begin{equation}
		\begin{aligned}
			{{\mathbf{y}}_j} = \sum\limits_{k = 1}^K {{\rm{diag}}({{\bf{h}}_{k,j}}){\mathbf{s}_k}{x_k} + {{\bf{w}}_j}},
		\end{aligned}
	\end{equation}
	where $\mathbf{w}_j \sim \mathcal{C N} \left( 0,\sigma^2 \mathbf{I}\right)$ is the additive complex Gaussian noise at the $j$-th satellite and $\mathbf{h}_{k,j}=[h_{1,k,j},\ldots,h_{N,k,j}]^T$.

	\subsection{Problem Statement}
	\subsubsection{Analysis of MUD Processing Load}
	Suppose that $J$ satellites aim to jointly estimate the transmit signals $\mathbf{x}=[x_1,x_2,\ldots,x_K]$ by following the maximum a \textit{posteriori} probability principle which can be written as
	\begin{equation}
		\label{equa_ML}
		\begin{aligned}
			\hat{x}_{k}=&\underset{a \in \mathcal{X}}{\arg \max } \sum\limits_{\scriptstyle{\bf{x}} \in {{\cal X}^K}\atop
				\scriptstyle{x_k} = a}  {\prod\limits_{k = 1}^K {P({x_k})}}
			 \prod\limits_{j = 1}^J {\prod\limits_{n = 1}^N p{\left( {{y_{n,j}}\mid \left\{x_l\right\}_{l \in F(n)}} \right)} } ,
		\end{aligned}
	\end{equation}
	where $\left\{x_l\right\}_{l \in F(n)}$ represents the set of transmit symbols of users occupying the $n$-th RE. As illustrated in left-hand-side of Fig. \ref{framework_short}, with the centralized processing, all received signal $\mathbf{y}_j$ should be aggregated to a CPU satellite to perform centralized detection which poses huge computational burden.
	
	Due to the stringent limitation on the processing capability of satellites, the MUD processing load  should be distributed onto multiple satellites for cooperative detection. To reduce the product operation and carry out the MUD distributedly, \eqref{equa_ML} can be written in the form of log-likelihood ratio (LLR) as
	\begin{equation}
		\label{equa_ML_LLR_decentralized}
		{L_j}({x_k}) = \sum\limits_{n = 1}^N {\log \frac{{p\left( {{y_{n,j}}\mid {x_k},{{\left\{ {{x_l}} \right\}}_{l \in F(n)\backslash k}}} \right)}}{{p\left( {{y_{n,j}}\mid \tilde x,{{\left\{ {{x_l}} \right\}}_{l \in F(n)\backslash k}}} \right)}}}  + \sum\limits_{\substack{j' = 1,\ldots,J \\ j' \ne j}} {{L_{j'}}({x_k})},
	\end{equation}
	where ${\tilde x}$ is a fixed reference symbol in $\mathcal{X}$ and ${L_j}({x_k})$ is the estimated LLR of  $x_k$ made by the $j$-th satellite. The entire MUD are decomposed in a round robin method where each satellite makes local estimation and output temporary ${L_j}({x_k})$ based on information transmitted from other satellites. The first term of \eqref{equa_ML_LLR_decentralized} is the local estimation  made by the $j$-th satellite and the second term represents the temporary joint estimation result made by other satellites.	Based on the superimposed structure, the first term of \eqref{equa_ML_LLR_decentralized} can be expressed in a sparse factor graph as shown in upper portion of Fig. \ref{framework_short}.

	\begin{proposition}\label{pro}
		The  MUD processing load at the $j$-th satellite increases exponentially with the number of its target users.
	\end{proposition}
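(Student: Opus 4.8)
The plan is to count the number of elementary operations needed to evaluate the local-estimation term of \eqref{equa_ML_LLR_decentralized} and to show that this count scales as $M^{K_j}$, where $K_j$ denotes the number of users targeted by the $j$-th satellite and $M=|\mathcal{X}|$ is the constellation size. First I would point out that the first summand of \eqref{equa_ML_LLR_decentralized} is not directly computable, since the interfering symbols $\{x_l\}_{l\in F(n)\backslash k}$ appearing in the conditional densities are themselves unknown. To produce the local LLRs $L_j(x_k)$ the satellite must therefore marginalize the per-RE likelihoods over the joint hypotheses of its target symbols, and it must do so jointly over the whole assigned set so that the mutual interference among those users is resolved rather than treated as noise. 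Consequently, the dominant cost of the local MUD is the repeated evaluation of the joint likelihood $\prod_{n=1}^{N} p\!\left(y_{n,j}\mid \{x_l\}_{l\in F(n)}\right)$ over every admissible configuration of the target symbols, exactly as the inner sum of the centralized rule \eqref{equa_ML} ranges over $\mathcal{X}^{K}$ but now restricted to the $K_j$ users handled at satellite $j$.

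Second I would count this cost. Each of the $K_j$ target symbols ranges over the size-$M$ alphabet $\mathcal{X}$, so the joint hypothesis space has cardinality $M^{K_j}$; each configuration requires only a fixed number of Gaussian per-RE density evaluations (at most $N$, each at constant cost), while the $M$ candidate values of $x_k$ and the $K_j$ output LLRs contribute merely polynomial factors. Hence the per-satellite load $C_j$ scales as $M^{K_j}$ up to polynomial factors, and because $M\ge 2$ this grows exponentially in $K_j$, which is the assertion.

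The step I expect to be the main obstacle is ruling out any polynomial shortcut hidden in the sparse factor-graph structure emphasized after \eqref{equa_ML_LLR_decentralized}: one might hope that sparsity of the spreading sequences collapses the joint marginalization into a product of cheap per-node messages. To close this gap I would supply a matching lower bound, arguing that exact inference over the connected component of the factor graph that contains the $K_j$ target users still forces the enumeration of $M^{d_f}$ states at each fully loaded RE (where $|F(n)|=d_f$), and that chaining these shared resource elements makes the number of jointly coupled target symbols grow with $K_j$, so the exponential dependence cannot be removed without sacrificing optimality. Establishing this lower bound rigorously, rather than merely exhibiting the $M^{K_j}$ upper bound, is the delicate part of the argument.
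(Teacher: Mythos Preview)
Your argument is correct in spirit but takes a different and considerably more elaborate route than the paper. The paper's proof stays at the level of a \emph{single} resource element: to evaluate the first term of \eqref{equa_ML_LLR_decentralized} on RE $n$, the $j$-th satellite must enumerate all $M^{d_f-1}$ combinations of the interfering symbols $\{x_l\}_{l\in F(n)\backslash k}$, and the paper then identifies $d_f$ with the number of target users colliding on that RE, so that more target users means a larger exponent. There is no global marginalization over $\mathcal{X}^{K_j}$ and no discussion of factor-graph shortcuts; the per-RE enumeration is taken as the operational cost and the exponential claim follows immediately.

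Your approach instead upper-bounds the local detection by full joint marginalization over all $K_j$ target symbols, giving $M^{K_j}$, and then asks whether the sparse spreading structure could collapse this to something polynomial. That concern is legitimate and, in a sense, more rigorous than what the paper does---the paper never argues that no message-passing shortcut exists---but it is also unnecessary for the proposition as stated, which is really an operational complexity count rather than a hardness result. The paper's per-RE argument sidesteps your ``delicate part'' entirely: once one commits to computing each conditional density $p(y_{n,j}\mid\{x_l\}_{l\in F(n)})$ by direct enumeration, the $M^{d_f-1}$ cost is immediate and the lower-bound question never arises. Your route would be the right one if the claim were that \emph{no} algorithm can do better; for the present purpose the paper's two-line per-RE count suffices.
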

	\begin{proof}
		As  observed in \eqref{equa_ML_LLR_decentralized}, to calculate the  $	{L_j}({x_k})$, the $j$-th satellite needs to traverse all possible values of $\left\{ {{x_l}} \right\} \in \mathcal{X} $ for all ${l \in F(n)\backslash k}$. Therefore, a total of $M^{d_f-1}$ possible combinations need to be calculated which is exponentially increased with $d_f$. Notice that $d_f$ is
		the number of target detected user of the $n$-th RE of the $j$-th satellite, thus the MUD processing load of the $j$-th satellite increases exponentially with the number of its target users to be detected.
	\end{proof}
\newtheorem{definition}{Definition} 
	\begin{definition}
		Define an indicating matrix $\mathbf{Q} \in \left\{0,1\right\}^{K\times J}$ to represent the satellite-user matchings where the element on the $k$-th row and $j$-th column denotes whether the source symbol of the $k$-th user is recovered at the $j$-th satellite. Also, define $\mathbf{G}_j=\left\{ k \mid q_{k,j} =1\right\}$ and $\mathbf{\bar{G}}_j=\left\{ k \mid q_{k,j} =0\right\}$ as the target and non-target user set of $j$-th satellite, respectively.
	\end{definition}
	\subsubsection{Impact of Satellite-User Matchings on System Sum-Rate}
	Since the source symbols of users in  $\mathbf{\bar{G}}_j$ are not recovered at the $j$-th satellite, their signals must be estimated from other satellites and canceled from the received signals $\mathbf{y}_j$. However, perfect cancellation is not achievable due to information loss or inaccurate estimation by other satellites. Therefore, for the  $k$-th user on the $n$-th RE of the $j$-th satellite, it suffers from additional interference caused by imperfect cancellation as
	\begin{equation}
		\label{interference_original}
		I_{n,k,j}=\epsilon \sum\limits_{i = 1,|h_{n,i,j}|^2<|h_{n,k,j}|^2}^K {{{\left| {{h_{n,k,j}}} \right|}^2} \cdot |{s_{n,i}}|^2 \cdot {q_{i,j}}},
	\end{equation}
	where $\epsilon=E\left\{ |x_i-{\tilde x_i}|^2\right\}$ is the expectation of the difference between actual signal $x_i$ and the estimated signal $\tilde{x}_i$. 
	
	As observed from \eqref{interference_original}, the interference received by the $k$-th user on the $j$-th satellite is associated with $\mathbf{G}_j$, which means $\mathbf{G}_j$ is directly linked to the sum-rate of the $k$-th user.
	Assume  $|h_{n,1,j}|^2\ge|h_{n,k,j}|^2\ge\ldots|h_{n,K,j}|^2$, the overall sum-rate $C_{\text{SUM}}$ of the system   can be obtained as
	\begin{equation}
		\label{function1}
		\begin{aligned}
			&C_{\text{SUM}}=\\&\sum\limits_{n = 1}^N {\sum\limits_{j = 1}^J {\sum\limits_{k = 1}^K {{\log _2 \Bigg(1+\frac{{{{\left| {{h_{n,k,j}}} \cdot {s_{n,k}} \right|}^2}  \cdot {q_{k,j}}}}{{\sigma _n^2 + \sum\limits_{i = 1}^{k - 1} {{{\left| {{h_{n,i,j}}} {s_{n,i}} \right|}^2} \cdot {q_{i,j}} + I_{n,k,j} } }}\Bigg)}} } }. 
		\end{aligned}
	\end{equation}
	
	The coupling relationship between the individual processing load and the system sum-rate with respect to $\mathbf{Q}$ inspire us to optimize the processing load of MUD by properly assigning the user matching of the satellites.
	
	Therefore, we formulate the MUD processing load allocation problem by maximizing the system sum-rate for payload-constrained satellites, as follows
	\begin{equation}
		\label{function2}
		\begin{aligned}
			\mathcal{P}_{\text{MAX-SR}}: & \hfill \,
			{\max _{\mathbf{Q}}} \, C_{\text{SUM}}\\
			\textbf{ s.t. }&
			\mathbf{C}_1:\,\sum\limits_{j = 1}^J {{q_{k,j}}}  \ge {q_l}, \,1 \le k \le K,\\
			&\mathbf{C}_2: \,\sum\limits_{k = 1}^K {{q_{k,j}}}  \le {q_s}, \,1 \le J \le J,\\
			& \mathbf{C}_3:\,q_{k,j}\in\left\{0,1\right\}, \,1 \le k \le K, \,1 \le J \le J,
		\end{aligned}
	\end{equation}
	where $\mathbf{C}_1$ constrains the least number of satellites allocated per user to guarantee the detection accuracy, $\mathbf{C}_2$ constrains the maximum number of users allocated per satellite to control the processing load at each satellite, and $\mathbf{C}_3$ denotes the binary constraint on  $q_{k,j}$.

	It should be noticed that $\mathcal{P}_{\text{MAX-SR}}$ is not a standard convex problem as the result of integer constraint in $\mathbf{C}_3$ and the interference term on the denominator of the objective function.

	\section{Proposed Workload Allocation Method for Distributed Multi-User Detection}
	To solve $\mathcal{P}_{\text{MAX-SR}}$, we aim to transform it into a convex problem with respect to $\mathbf{Q}$. We first introduce Lagrangian dual function and quadratic transformation to conduct equivalent conversion. Then, successive lower bound approximation (SLM) is used to decouple the converted problem into approximated subproblems to perform iterative convex optimization.

	\subsection{Equivalent Transformation of $\mathcal{P}_{\text{MAX-SR}}$}
	We first relax the integer constraint on  $\left\{q_{k,j}\right\}$ and allow them to vary continuously $0\le\tilde{q}_{k,j}\le 1$. A penalty function is added to $C_{\text{SUM}}$ such that non-integer solutions of $\mathbf{Q}$ are penalized. We choose a moderately high value $\lambda$ and let $p(\mathbf{Q})=\lambda \sum\limits_{j = 1}^J {\sum\limits_{k = 1}^K {(\tilde{q}_{k,j}^2 - {\tilde{q}_{k,j}})} } $ be the penalty function since $p(\mathbf{Q})<0$ for all non-integer solutions of $\mathbf{Q}$. $\mathcal{P}_{\text{MAX-SR}}$ is transformed into 
	$\mathbf{\mathcal{P}}_{1}(\mathbf{Q})$:
	\begin{equation}
		\label{function4}
		\begin{aligned}
			\mathbf{\mathcal{P}}_{1}(\mathbf{Q}): \, & 
			{\max _{\mathbf{Q}}} \, C_{\text{SUM}}+p\left( {{\bf{Q}}} \right)\\
			\textbf{ s.t. }&
			\mathbf{C}_1,\mathbf{C}_2,\\
			&\mathbf{C}_4:\, 0\le \tilde{q}_{k,j}\le 1,\,\,1 \le k \le K, \,1 \le J \le J. 
		\end{aligned}
	\end{equation}
	
	The objective function of $\mathbf{\mathcal{P}}_{1}(\mathbf{Q})$ is still non-concave due to the interference term, hence we introduce auxiliary matrix $\boldsymbol{\gamma}$ with element $\left\{\gamma_{n,k,j}\right\}$. Then, we transform $\mathbf{\mathcal{P}}_{1}(\mathbf{Q})$ into an equivalent problem $\mathbf{\mathcal{P}}_2(\mathbf{Q},\boldsymbol{\gamma})$ as
	\begin{equation}
		\label{function5}
		\begin{aligned}
			\mathcal{P}_2(\mathbf{Q},\boldsymbol{\gamma}):& \mathop {\max }\limits_{{\bf{Q}},{\boldsymbol{\gamma }}} \sum\limits_{n = 1}^N {\sum\limits_{j = 1}^J {\sum\limits_{k = 1}^K {\log_2 (1 + {\gamma _{n,k,j}})} } }  + p\left( {{\bf{Q}}} \right)\\
			\textbf{ s.t. }&
			\mathbf{C}_1,\mathbf{C}_2,\mathbf{C}_4,\\
			& \mathbf{C}_5 : \gamma_{n,k,j}\le\frac{{{{\left| {{h_{n,k,j}}} {s_{n,k}} \right|}^2} \cdot {\tilde{q}_{k,j}}}}{{\sigma _n^2 + \sum\limits_{i = 1}^{k - 1} {{{\left| {{h_{n,i,j}}} {s_{n,i}} \right|}^2}  \cdot {\tilde{q}_{i,j}} + I_{n,k,j} } }}
		\end{aligned}
	\end{equation}
	
	Given $\mathbf{Q}$, \eqref{function5} is a convex problem with respect to  $\boldsymbol{\gamma}$. The corresponding Lagrangian function with dual variables $\left\{\mu_{n,k,j} \right\}$ for $\mathbf{C}_5$ can be written as 
	\begin{equation}\label{function6}
		\begin{aligned}
			& L_P(\boldsymbol{\gamma}, \boldsymbol{\mu})=\sum_{n=1}^N \sum_{j=1}^J \sum_{k=1}^K \log _2\left(1+\gamma_{n, k, j}\right) \\
			& -\sum_{n=1}^N \sum_{j=1}^J \sum_{k=1}^K \mu_{n, k, j}\left(\gamma_{n, k, j}-\right. \\
			& \left.\frac{\left|h_{n, k, j}\cdot s_{n, k}\right|^2  \cdot \tilde{q}_{k, j}}{\sigma_n^2+\sum_{i=1}^{k-1}\left|h_{n, i, j} \cdot {s_{n,i}}\right|^2  \cdot \tilde{q}_{i, j}+I_{n, k, j}}\right)+p(\mathbf{Q}).
		\end{aligned}
	\end{equation}
	Applying  the KKT conditions, we have the optimal dual variables as
	\begin{equation}
			\label{function10}
				\mu_{n,k,j}^*=\frac{1}{\ln 2 *\left(1+\gamma_{n, j, k}^*\right)}
	\end{equation} 
where
\begin{equation}
	\gamma_{n, k, j}^*=\frac{\left|h_{n, k, j}\right|^2 \cdot\left|s_{n, k}\right|^2 \cdot q_{k, j}}{\sigma_n^2+\sum_{i=1}^{k-1}\left|h_{n, i, j}\cdot s_{n, k}\right|^2  \cdot q_{i, j}+I_{n, k, j}}
\end{equation}
	Then we substitute \eqref{function10} into \eqref{function6} and $	\mathcal{P}_2(\mathbf{Q},\boldsymbol{\gamma})$ can be transformed into 
	\begin{equation}\label{function7}
		\begin{aligned}
			\mathbf{\mathcal{P}}^{'}_2(\mathbf{Q},\boldsymbol{\gamma}):&
			\max _{\boldsymbol{\gamma}}  \sum_{j=1}^J \sum_{n=1}^N \sum_{k=1}^K\left[\left(\log _2\left(1+\gamma_{n, k,j}\right)-\frac{\gamma_{n, k,j}}{\ln 2}\right)\right. \\
			+ & \left.\frac{{{{\left| {{h_{n,k,j}}}\cdot {s_{n,k}}\right|}^2}\cdot{\tilde{q}_{k,j}}(1 + {\gamma _{n,k,j}})}}{{\ln 2(\sigma _z^2 + \sum\limits_{i = 1}^k {{{\left| {{h_{n,i,j}}} \cdot {s_{n,i}} \right|}^2}} {\tilde{q}_{i,j}} + I_{n,k,j})}}\right] + p(\mathbf{Q})\\
			\textbf{ s.t. }&
			\mathbf{C}_1,\mathbf{C}_2,\mathbf{C}_4.
		\end{aligned}
	\end{equation}
	\newtheorem{lemma}{Lemma}
	\begin{lemma}\label{lemma1}
		$\mathcal{P}^{'}_2(\mathbf{Q},\boldsymbol{\gamma})$ is equivalent to $\mathcal{P}_2(\mathbf{Q},\boldsymbol{\gamma})$.
	\end{lemma}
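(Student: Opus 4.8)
The plan is to fix the matching $\mathbf{Q}$ and argue that $\mathcal{P}_2$ and $\mathcal{P}'_2$ are equivalent as problems in the auxiliary variable $\boldsymbol{\gamma}$, since both share the identical feasible set $\{\mathbf{C}_1,\mathbf{C}_2,\mathbf{C}_4\}$ and the same penalty $p(\mathbf{Q})$. First I would observe that, for fixed $\mathbf{Q}$, $\mathcal{P}_2$ is a convex program in $\boldsymbol{\gamma}$: the objective $\sum\log_2(1+\gamma_{n,k,j})$ is concave and componentwise strictly increasing, while the right-hand side of $\mathbf{C}_5$ is a nonnegative constant. Monotonicity forces $\mathbf{C}_5$ to be active at optimality, so the optimal auxiliary variable equals the SINR, i.e. $\gamma^*_{n,k,j}$ is the right-hand side of $\mathbf{C}_5$, and the optimal value of $\mathcal{P}_2$ coincides with $\sum\log_2(1+\mathrm{SINR})+p(\mathbf{Q})$, the sum-rate. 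Because the constraints are affine and a strict interior point exists, Slater's condition holds; strong duality then applies and the KKT conditions are both necessary and sufficient.

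Next I would form the Lagrangian $L_P(\boldsymbol{\gamma},\boldsymbol{\mu})$ of \eqref{function6} and impose stationarity $\partial L_P/\partial\gamma_{n,k,j}=0$, which yields the dual optimum $\mu^*_{n,k,j}=1/[\ln 2\,(1+\gamma^*_{n,k,j})]$ of \eqref{function10}. Substituting $\mu^*$ back into $L_P$ and invoking complementary slackness (the $\mathbf{C}_5$ residual vanishes at $\gamma^*$) reproduces exactly the objective of $\mathcal{P}'_2$ in \eqref{function7}; the bookkeeping that merges the noise-plus-interference denominator with the desired-signal term, so that the summation index runs to $k$, is routine. It then remains to show that freely maximizing this reformulated objective over $\boldsymbol{\gamma}$ recovers the same optimizer and the same value.

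The crux is the self-consistency of the transform. I would write each per-term summand of $\mathcal{P}'_2$ as $g(\gamma)=\log_2(1+\gamma)-\gamma/\ln 2+A(1+\gamma)/[\ln 2\,(A+D)]$, with $A$ the desired-signal power and $D$ the SINR denominator, and note that $g$ is concave in $\gamma$ (a concave logarithm plus affine terms), so its unique stationary point is the global maximizer. Setting $g'(\gamma)=0$ gives $1/(1+\gamma)=D/(A+D)$, hence $\gamma^*=A/D=\mathrm{SINR}$, matching the active-constraint optimizer of $\mathcal{P}_2$. Evaluating $g(\gamma^*)$, the two linear terms cancel and $g(\gamma^*)=\log_2(1+A/D)=\log_2(1+\mathrm{SINR})$, so the two problems attain the same optimal value and the same $\boldsymbol{\gamma}^*$ for every feasible $\mathbf{Q}$; equivalence follows.

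I expect the main obstacle to be this last self-consistency step rather than the duality argument: one must verify that the engineered coefficients, namely the $-\gamma/\ln 2$ penalty and the $(1+\gamma)$ factor multiplying the fractional term, are precisely those for which the stationary point reproduces the SINR and $g(\gamma^*)$ collapses back to the original logarithm. The concavity of $g$ is what guarantees that the recovered stationary point is a genuine maximum rather than a saddle, so I would treat the concavity check and the cancellation in $g(\gamma^*)$ as the two load-bearing computations.
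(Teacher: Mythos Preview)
Your proposal is correct and follows the same KKT/Lagrangian-dual route as the paper, which simply appeals to stationarity, complementary slackness, and dual feasibility. Your treatment is considerably more detailed---in particular, the self-consistency check that the concave per-term function $g$ attains its maximum precisely at the SINR and collapses back to $\log_2(1+\mathrm{SINR})$ fills in a step the paper's three-sentence proof leaves implicit.
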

	\begin{proof}
		We resort to KKT conditions to demonstrate the equivalence. \eqref{function10} is obtained using the Lagrangian stationarity and complementary slackness. Since  $\tilde{q}_{k,j}\ge 0$ always holds in $\mathbf{C}_4$, the dual feasibility holds as well.
	\end{proof}

	\subsection{Iterative Solution Using Successive Lower-bound Approximation}
	
	For fixed $\boldsymbol{\gamma}$, the objective function of $\mathcal{P}^{'}_2(\mathbf{Q},\boldsymbol{\gamma})$ is not a standard concave function in terms of  $\left\{q_{k,j}\right\}$ but is a sum-of-ratios problem. Therefore, we can rewrite the optimization problem $\mathcal{P}_3(\mathbf{Q},\boldsymbol{\gamma})$ by quadratic transformation by introducing auxiliary variables $\left\{\theta_{n,k,j}\right\}$, as
	\begin{equation}\label{function8}
		\begin{aligned}
			\mathbf{\mathcal{P}}_3(\mathbf{Q},\boldsymbol{\gamma,\theta})&:\\
			\max _{\boldsymbol{\theta}} & \sum_{j=1}^J\sum_{n=1}^N \sum_{k=1}^K\left[2 \theta_{n, k, j} \sqrt{\left| {{h_{n,k,j}}}  {s_{n,k}}\right|^2 \tilde{q}_{k, j}\left(1+\gamma_{n, k,j}\right)}\right. \\
			& \left.-\theta_{n, k,j}^2\left({\sigma _n^2 + \sum\limits_{i = 1}^k {{{\left| {{h_{n,i,j}}}{s_{n,i}} \right|}^2}}{\tilde{q}_{i,j}} + I_{n,k,j}}\right) \ln 2\right.\\&\left.+g(\boldsymbol{\gamma}, \boldsymbol{\theta})\right]+p(\mathbf{Q})\\
			\textbf{ s.t. }&
			\mathbf{C}_1,\mathbf{C}_2,\mathbf{C}_4.
		\end{aligned}
	\end{equation}
	where $g(\boldsymbol{\gamma})=\log_2\left( {1 + {\gamma _{n,k,j}}} \right) - \frac{{{\gamma _{n,k,j}}}}{{\ln 2}}$.
	
	\renewcommand{\algorithmicrequire}{ \textbf{Input:}} 
	\renewcommand{\algorithmicensure}{ \textbf{Output:}} 
	\begin{algorithm}[t]
		\caption{Proposed Load Allocation Method for $\mathcal{P}_{\text{MAX-SR}}$ }\label{algorithm1}
		\begin{algorithmic}[1]
			\REQUIRE  
			$\mathbf{H}$, $q_s$, convergence tolerance $\varepsilon_m$,$\varepsilon_n$.
			\ENSURE  Allocation matrix $\mathbf{Q}$.
			\STATE \textbf{Initialize:} Set the initial value of allocation matrix $\mathbf{Q}^{(0,0)}$ that meets $\mathbf{C}_1,\mathbf{C}_2,\mathbf{C}_4$. Set $m=0$ and $n=0$.
			\REPEAT
			\STATE Compute optimized variables $\boldsymbol{\gamma}^{(m)}$ by \eqref{function10};
			\STATE Given $\boldsymbol{\gamma}$ and $\mathbf{Q}$, update $\boldsymbol{\theta}^{(m,n)}$ solving $\mathbf{\mathcal{P}}_3(\mathbf{Q},\boldsymbol{\gamma,\theta})$;
			\REPEAT
			\STATE Given $\boldsymbol{\gamma}$ and $\boldsymbol{\theta}$, update $\mathbf{Q}^{(m,n)}$ solving $\mathbf{\mathcal{P}}^{'}_3(\mathbf{Q},\boldsymbol{\gamma,\theta})$;
			\STATE $n=n+1$;
			\UNTIL{reach convergence $\varepsilon_n$}
			\STATE $m=m+1$;
			\UNTIL{reach convergence $\varepsilon_m$}.
		\end{algorithmic}
	\end{algorithm}

	However, for fixed $\left\{\theta_{n,k,j}\right\}$,  $\mathbf{\mathcal{P}}_3(\mathbf{Q},\boldsymbol{\gamma,\theta})$ is still non-concave in terms of  $\left\{q_{k,j}\right\}$ due to the existence of the penalty term $p(\mathbf{Q})$. To solve $\mathbf{\mathcal{P}}_3(\mathbf{Q},\boldsymbol{\gamma,\theta})$, we resort to SLM algorithm following the instructions given in \cite{razaviyayn2013unified}.
	\begin{equation}
		\label{function9}
		\begin{aligned}
			\mathbf{\mathcal{P}}^{'}_3(\mathbf{Q},\boldsymbol{\gamma,\theta})&:\\
			\max _{\mathbf{Q}} & \sum_{j=1}^J\sum_{n=1}^N \sum_{k=1}^K\left[2 \theta_{n, k, j} \sqrt{\left|h_{n, k}\right|^2 \tilde{q}_{k, j}\left(1+\gamma_{n, k,j}\right)}\right. \\
			-\theta_{n, k,j}^2 & \left.\left({\sigma _z^2 + \sum\limits_{i = 1}^k {{{\left| {{h_{n,i,j}}} \right|}^2}} |{s_{n,i}}|^2{\tilde{q}_{i,j}} + I_{n,k,j}}\right) \ln 2 \right.\\
			+g(\boldsymbol{\gamma},&\left. \boldsymbol{\theta})\right]+p \left( {{{\bf{Q'}}}} \right)+ {\rm{tr}}\left[ {\nabla p {{\left( {{{\bf{Q}}'}} \right)}^T}\left( {{{\bf{Q}}  } - {{\bf{Q}}'}} \right)} \right] \\
			\textbf{ s.t. }&
			\mathbf{C}_1,\mathbf{C}_2,\mathbf{C}_4.
		\end{aligned}
	\end{equation}
	where ${\left[ {\nabla p \left( {{\mathbf{Q}}} \right)} \right]_{k,j}}$ is a $K \times J$ matrix that satisfies
	\begin{equation}
		{\left[ {\nabla p \left( {{\mathbf{Q}}} \right)} \right]_{k,j}} = {\left. {\frac{{\partial p (\mathbf{Q})}}{{\partial {\tilde{q}_{k,j}}}}} \right|_{\mathbf{Q} = {\mathbf{Q}'}}},
	\end{equation}
	\begin{lemma}\label{lemma2}
		The value of $\mathbf{Q}$ under fixed $\boldsymbol{\gamma}$ and $\boldsymbol{\theta}$ can be optimized by  iteratively solving $\mathbf{\mathcal{P}}^{'}_3(\mathbf{Q},\boldsymbol{\gamma,\theta})$.	
	\end{lemma}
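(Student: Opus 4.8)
The plan is to recognize Lemma~\ref{lemma2} as an instance of the successive lower-bound maximization (SLM) framework of \cite{razaviyayn2013unified}, and to prove it by verifying that the objective of $\mathcal{P}^{'}_3(\mathbf{Q},\boldsymbol{\gamma,\theta})$ is a valid minorizer of the objective of $\mathcal{P}_3(\mathbf{Q},\boldsymbol{\gamma,\theta})$ at every iterate. Write the $\mathcal{P}_3$ objective (for fixed $\boldsymbol{\gamma},\boldsymbol{\theta}$) as $f(\mathbf{Q}) = c(\mathbf{Q}) + p(\mathbf{Q})$, where $c(\mathbf{Q})$ collects all terms except the penalty. The first observation to record is that $c(\mathbf{Q})$ is already concave in $\left\{\tilde{q}_{k,j}\right\}$: the square-root term is $\propto \sqrt{\tilde{q}_{k,j}}$, concave on $\tilde{q}_{k,j}\ge 0$ (which holds under $\mathbf{C}_4$) with a nonnegative coefficient since $\theta_{n,k,j}\ge 0$, while the term $-\theta_{n,k,j}^2(\sigma_n^2+\sum_i |h_{n,i,j}s_{n,i}|^2\tilde{q}_{i,j}+I_{n,k,j})\ln 2$ is affine in $\mathbf{Q}$. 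Hence the sole source of non-concavity is the convex penalty $p(\mathbf{Q})=\lambda\sum_{j}\sum_{k}(\tilde{q}_{k,j}^2-\tilde{q}_{k,j})$, a separable sum of convex quadratics.

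First I would linearize exactly this convex term. Because $p$ is convex and differentiable, its first-order Taylor expansion at the current iterate $\mathbf{Q}'$ lies globally below it, i.e. $p(\mathbf{Q}) \ge p(\mathbf{Q}') + \mathrm{tr}[\nabla p(\mathbf{Q}')^{T}(\mathbf{Q}-\mathbf{Q}')]$ for every feasible $\mathbf{Q}$. Denoting the $\mathcal{P}^{'}_3$ objective by $u(\mathbf{Q};\mathbf{Q}') = c(\mathbf{Q}) + p(\mathbf{Q}') + \mathrm{tr}[\nabla p(\mathbf{Q}')^{T}(\mathbf{Q}-\mathbf{Q}')]$, I would then check the three SLM surrogate conditions in turn: (i) tightness, $u(\mathbf{Q}';\mathbf{Q}')=f(\mathbf{Q}')$, which is immediate since the linearized penalty reduces to $p(\mathbf{Q}')$ at $\mathbf{Q}=\mathbf{Q}'$; (ii) global minorization, $u(\mathbf{Q};\mathbf{Q}')\le f(\mathbf{Q})$, which follows directly from the convexity inequality above after adding the common concave part $c(\mathbf{Q})$; and (iii) gradient consistency, $\nabla_{\mathbf{Q}} u(\mathbf{Q}';\mathbf{Q}')=\nabla_{\mathbf{Q}} f(\mathbf{Q}')$, since the gradient of the affine surrogate of $p$ at $\mathbf{Q}'$ equals $\nabla p(\mathbf{Q}')$.

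With these properties established, the remaining step is to note that $u(\mathbf{Q};\mathbf{Q}')$ is concave in $\mathbf{Q}$ (concave $c$ plus an affine term) and that the region defined by $\mathbf{C}_1,\mathbf{C}_2,\mathbf{C}_4$ is a convex polytope, so each subproblem $\mathcal{P}^{'}_3$ is a concave maximization solvable to global optimality. I would then invoke the convergence theorem of \cite{razaviyayn2013unified}: the minorization guarantees a monotonically non-decreasing objective $f(\mathbf{Q}^{(r)})$, and conditions (i)--(iii) together with continuity guarantee that every limit point of the iterates is a stationary (KKT) point of $\mathcal{P}_3$. This is precisely the optimization claimed in the lemma.

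I expect the main obstacle to be the rigorous verification of the concavity of $c(\mathbf{Q})$ and of the sign of the coupling---specifically confirming $\theta_{n,k,j}\ge 0$ from the quadratic transform so the square-root term carries a nonnegative coefficient---and, more subtly, securing the gradient-consistency condition so that limit points are genuine KKT points of the original (relaxed) problem rather than mere fixed points of the iteration. The minorization inequality itself is routine once the convexity of $p$ is recorded.
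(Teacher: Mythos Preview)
Your proposal is correct and follows essentially the same route as the paper: linearize the convex penalty $p(\mathbf{Q})$ at the current iterate to obtain a concave global lower bound of the $\mathcal{P}_3$ objective, then invoke the SLM conditions of \cite{razaviyayn2013unified}. Your write-up is in fact more complete than the paper's, since you explicitly verify the concavity of the non-penalty part $c(\mathbf{Q})$, the sign condition $\theta_{n,k,j}\ge 0$, and the tightness/gradient-consistency properties that the paper only implicitly assumes.
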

	\begin{proof}
		The objective function of $\mathbf{\mathcal{P}}^{'}_3(\mathbf{Q},\boldsymbol{\gamma,\theta})$ is an  approximation of the objective function of  $\mathbf{\mathcal{P}}_3(\mathbf{Q},\boldsymbol{\gamma,\theta})$ in the neighborhood of $\mathbf{Q}$. To ensure the convergence and equivalence between the original optimized function $f(x)$ and its approximation function $u(\cdot,\cdot)$, four conditions need to be satisfied when applying SLM algorithm \cite{razaviyayn2013unified}:
		\begin{itemize}
			\item \quad $u(y,y)=f(y)\quad \forall y\in\mathcal{X}$,
			\item \quad $u(x,y)\ge f(x) \quad \forall x,y\in\mathcal{X}$,
			\item \quad $\left.u^{\prime}(x, y ; d)\right|_{x=y}=f^{\prime}(y ; d) \quad \forall d \text { with } y+d \in \mathcal{X}$ ,
			\item \quad $u(x,y)$ is continuous in $(x,y)$.
		\end{itemize}
	
		Notice that  $p (\mathbf{Q})$ is a convex function whose first order linear approximation is $p \left( {{\mathbf{Q}^\prime }} \right) + {\mathop{\rm tr}\nolimits} \left[ {\nabla p {{\left( {{\mathbf{Q}^\prime }} \right)}^T}\left( {\mathbf{Q} - {\mathbf{Q}^\prime }} \right)} \right]$ and $p (\mathbf{Q})$ is a convex function. We have
		\begin{equation}
			p (\mathbf{Q}) \ge p \left( {{\mathbf{Q}^\prime }} \right) + {\mathop{\rm tr}\nolimits} \left[ {\nabla p {{\left( {{\mathbf{Q}^\prime }} \right)}^T}\left( {\mathbf{Q} - {\mathbf{Q}^\prime }} \right)} \right].
		\end{equation}
		It is obvious that the linear approximation of $p \left(\mathbf{Q}\right)$ is globally lower than $p \left(\mathbf{Q}\right)$. Therefore, the approximation to \eqref{function9} is a lower bound concave approximation to \eqref{function8} satisfying the regulations on $u(\cdot,\cdot)$.
	\end{proof}
	Finally, $\mathbf{\mathcal{P}}^{'}_3(\mathbf{Q},\boldsymbol{\gamma,\theta})$ is a convex optimization function for a convex feasible set  $\left\{q_{k,j}\right\}$ under fixed value of $\boldsymbol{\gamma}$ and $\boldsymbol{\theta}$. 	Therefore, we then propose an iterative MUD workload distribution method. We initialize $\mathbf{Q}$ that meets $\mathbf{C}_1,\mathbf{C}_2,\mathbf{C}_4$. Compute  \eqref{function10} and solve $\mathbf{\mathcal{P}}_3(\mathbf{Q},\boldsymbol{\gamma,\theta})$. Then, we iteratively solve $\mathbf{\mathcal{P}}^{'}_3(\mathbf{Q},\boldsymbol{\gamma,\theta})$ until convergence. The above entire procedure repeats until convergence, which is summarized in  Algorithm 1.
	
	Each step of Algorithm \ref{algorithm1} involves solving convex optimization problems $\mathbf{\mathcal{P}}_3(\mathbf{Q},\boldsymbol{\gamma,\theta})$ and $\mathbf{\mathcal{P}}^{'}_3(\mathbf{Q},\boldsymbol{\gamma,\theta})$. Assume that each convex problem can be solved using primal-dual interior point algorithm thus the complexity is $\mathcal{O}\left(\sqrt{n} \log _2\left(\frac{1}{\varepsilon}\right)\right)$ with $n$ the number of inequality constraints. Thus the overall complexity of the proposed allocation method is $\mathcal{O}\left(\sqrt{K+J+KJ} \log _2\left(\frac{1}{\varepsilon_n}\right)\right)+\mathcal{O}\left( \log _2\left(\frac{1}{\varepsilon_m}\right)\right)$. The convergence of the proposed allocation method is guaranteed due to equivalent transformation and {\bf Lemma \ref{lemma2}}.

		\begin{table}[t]
		\centering
		\caption{Simulation settings.}
		\label{parameters}
		\renewcommand{\arraystretch}{1.1}
		\begin{tabular}{c c}
			\toprule
			\textbf{Definition} & \textbf{Value} \\ [0.5ex]
			\hline
			Frequency band          & 2 GHz   \\
			Frequency bandwdith        & 15 MHz  \\ 
			G/T & -33.6 dB/K  \\
			Eequivalent isotropically radiated power & $-7$ dBW \\
			Noise power density    & $-173$ dBm/Hz   \\ 
			Low earth orbit satellite altitute & 600 km \\
			\bottomrule
		\end{tabular}
	\end{table}
	
	\section{Numerical Results}
	In this section, numerical results are presented to evaluate the performance of the proposed  method.  We consider $K = 32$, $N = 12$, $J=8$ and the same spreading sequences $\mathbf{s}_k$ as \cite{yuan2020iterative}. In the following, we consider $q_l=\frac{q_s\times K}{J}$ for without loss of generality. The same channel model used in Section 6.6.2 of 3GPP Technical Report 38.811 \cite{3gpp.38.811} is adopted and other
	simulation parameters are summarized in Table \ref{parameters}.
	
	Figs. \ref{convergence1} and \ref{convergence2} give the convergence performance of the proposed method under different values of $\epsilon$ and $q_s$.  It can be observed that the proposed  method has good convergence and converges within $10$ external iterations.
	\begin{figure}[!t] 
	\centering
	\captionsetup[subfloat]{labelsep=none,format=plain,labelformat=empty,textfont=rm}
	\subfloat[\footnotesize{{\color{black}(a)  Overall sum-rate vs. iteration numbers with $q_s= 3$}}]{
		\begin{minipage}{0.24\textwidth}
			\label{convergence1}
			\centering 
			\includegraphics[width=1.0\linewidth]{./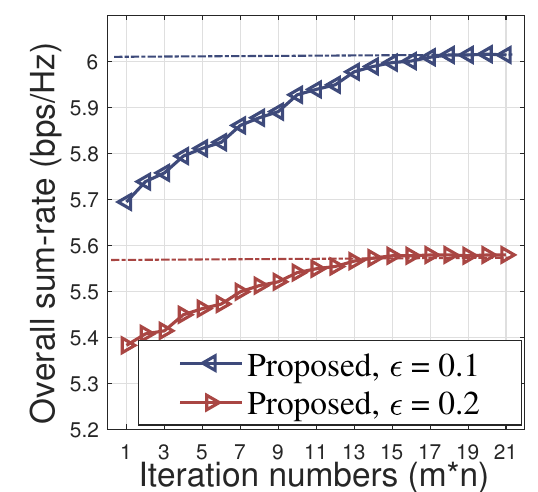}
		\end{minipage}
	}
	\subfloat[\footnotesize{{\color{black}(b)   Overall sum-rate vs. iteration numbers with $\epsilon= 0.2$ }}]{
		\begin{minipage}{0.24\textwidth}
			\label{convergence2}
			\centering 
			\includegraphics[width=1.0\linewidth]{./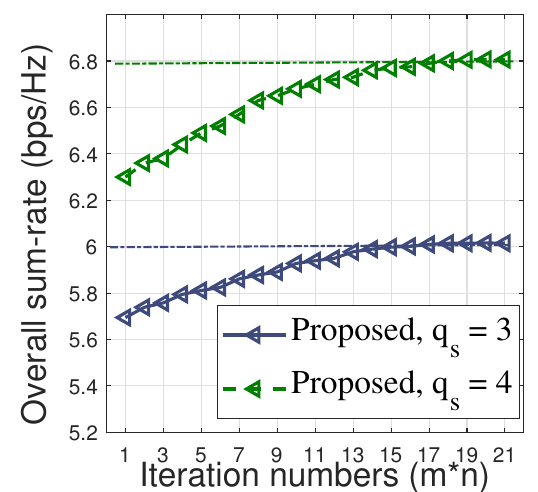}
		\end{minipage}
	}
	{\color{black}\caption{ Convergence performance: overall sum-rate of the proposed method vs. iteration numbers with varying (a)  $\epsilon$ (b)  $q_s$.}}
\end{figure}	
		\begin{figure}[!t] 
		\centering
		\captionsetup[subfloat]{labelsep=none,format=plain,labelformat=empty,textfont=rm}
		\subfloat[\footnotesize{{\color{black}(a)  Overall sum-rate vs. $\epsilon$ with $q_s=3$}}]{
			\begin{minipage}{0.24\textwidth}
				\label{benchmark1}
				\centering 
				\includegraphics[width=1.0\linewidth]{./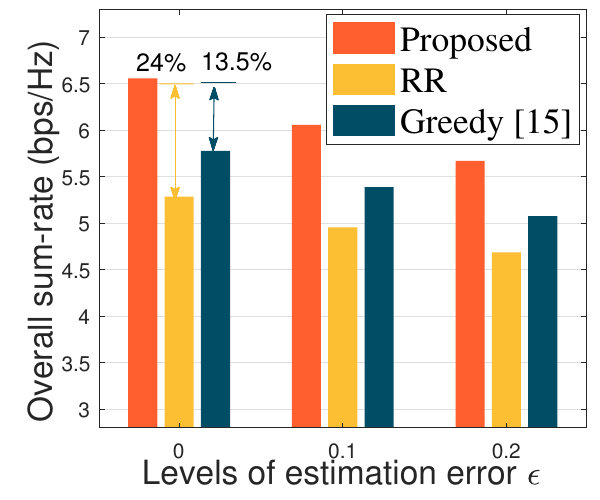}
			\end{minipage}
		}
		\subfloat[\footnotesize{{\color{black}(b)   Overall sum-rate vs. $q_s$ with $\epsilon$=0.2}}]{
			\begin{minipage}{0.24\textwidth}
				\label{benchmark2}
				\centering 
				\includegraphics[width=1.0\linewidth]{./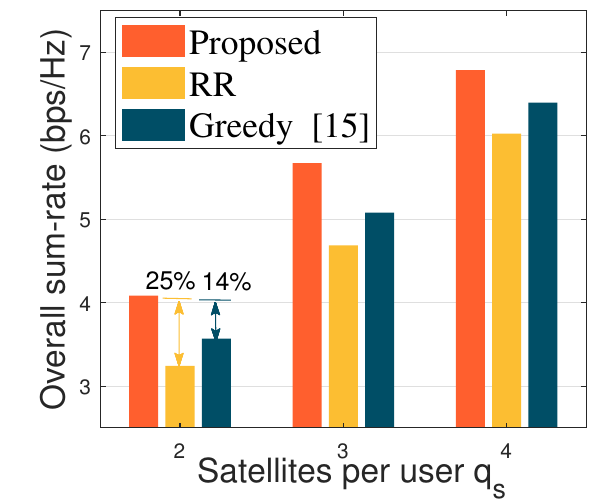}
			\end{minipage}
		}
		{\color{black}\caption{Comparisons on the sum-rate with varying:  (a) $\epsilon$, and (b) $q_s$.}}
	\end{figure}

	Fig. \ref{benchmark1} compares the performance of the proposed method with some benchmarks. Greedy method is performed by allowing each user to choose those satellites with the best channel qualities \cite{6519408}. Round robin (RR) method means the task allocation among $J$ satellites are decided in a round robin manner randomly to meet $\mathbf{C}_1,\mathbf{C}_2,\mathbf{C}_3$. Compared to greedy method and RR method, our approach provides another $24\%$ and $13.5\%$ gain when the interference from non-target users is canceled perfectly. Also, it can be observed that the overall sum-rate decreases with the increase of estimation errors of non-target users $\epsilon$ for each satellite.	
	\begin{figure}[!t] 
		\centering
		\captionsetup[subfloat]{labelsep=none,format=plain,labelformat=empty,textfont=rm}
		\subfloat[\footnotesize{{\color{black}(a)  Overall sum-rate vs. $J$ with $\epsilon= 0.4$}}]{
			\begin{minipage}{0.24\textwidth}
				\label{availableset1}
				\centering 
				\includegraphics[width=1.0\linewidth]{./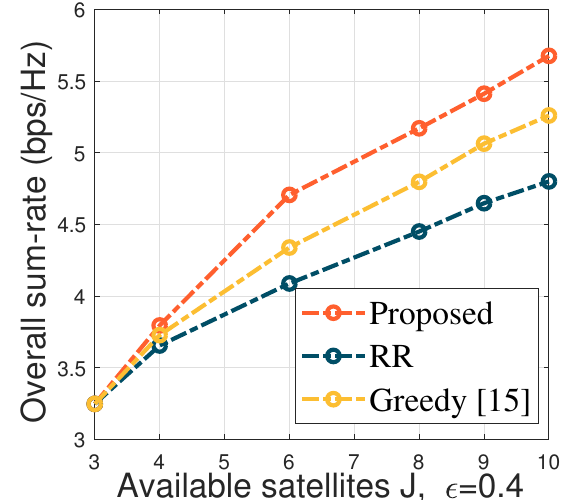}
			\end{minipage}
		}
		\subfloat[\footnotesize{{\color{black}(b)   Overall sum-rate vs. $J$ with $\epsilon= 0.6$ }}]{
			\begin{minipage}{0.24\textwidth}
				\label{availableset2}
				\centering 
				\includegraphics[width=1.0\linewidth]{./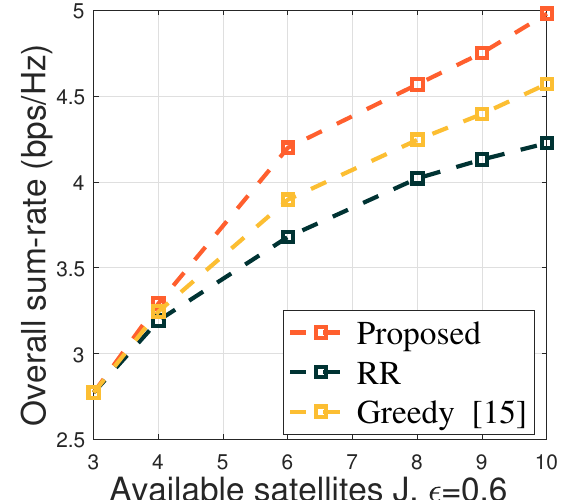}
			\end{minipage}
		}
		{\color{black}\caption{ Comparisons on the sum-rate with varying  $J$ under (a) $\epsilon=0.4$, and (b) $\epsilon=0.6$.}}
	\end{figure}	
	
	Fig. \ref{benchmark2} compares the performance of the proposed  method with the aforementioned methods with different values of $q_s$. It can be seen that all methods see increases of overall sum-rate as the increase of $q_s$. If more satellites are simultaneously serving a single user, it would result in higher system sum-rate. However, $q_l$ would also increase, resulting in MUD complexity for a single satellite.
	
	Figs. \ref{availableset1} and \ref{availableset2} compare the performance of the proposed method with the aforementioned methods with varying $J$ under different $\epsilon$. As  $J$ increases, the system has more choices and tends to choose those satellites with better channel conditions thus introducing better performance.
	
	Fig. \ref{MUD} compares both the processing load and the sum-rate gap of the proposed method and the centralized processing upper bound with varying $q_s$. The MUD processing load is approximated by the number of traversals calculated as stated in {\bf Proposition \ref{pro}}.
	As $q_s$ increases,  the pace of MUD processing load growth surpasses the rate of sum-rate enhancement significantly. Specially, around $75\%$ computational complexity can be reduced with less than  $10\%$ sum-rate loss performing under our proposed allocation method. 
		\begin{figure}
			\centering
			\includegraphics[width=1.0\linewidth]{./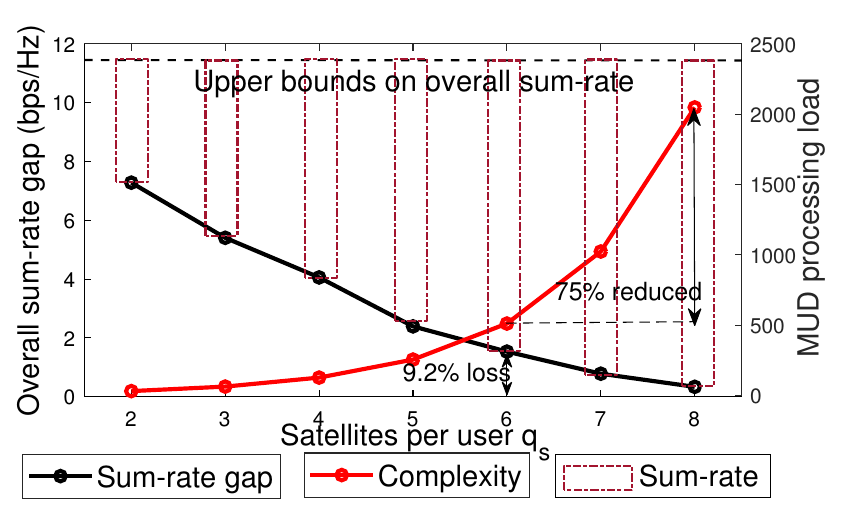}
			\caption{Comparisons on the sum-rate gap and the MUD processing load with varying $q_s$.}
			\label{MUD}
		\end{figure}
	\section{Conclusion}
	This paper studied the processing load allocation  for on-board MUD in payload-constrained satellite networks. We first formulated the load allocation problem as a system sum-rate maximization problem by optimizing the satellite-user matchings. An iterative algorithm using quadratic transformation and SLM algorithm was proposed to solve the non-trivial problem. 
	Numerical results showed remarkably complexity reduction compared with centralized processing as well as significant sum-rate gain compared with other allocation  methods.

	\bibliography{IEEEabrv,myref}
	\bibliographystyle{IEEEtran}

\end{document}